 \newtheorem{theorem}{Theorem}
\begin{document}

\title{Tripartite Bell-type inequalities for measures of quantum coherence and skew information}
\author{Liang Qiu}
\affiliation{%
	Department of Physics, China University of Mining and Technology, Xuzhou, Jiangsu 221116, China%
	}
\email{lqiu@cumt.edu.cn}
\affiliation{%
    Institute for Quantum Science and Technology, University of Calgary, Alberta T2N 1N4, Canada%
	}

\author{Zhi Liu}
\affiliation{%
	Department of Physics, China University of Mining and Technology, Xuzhou, Jiangsu 221116, China%
	}

\author{Fei Pan}
\affiliation{%
	Department of Physics, China University of Mining and Technology, Xuzhou, Jiangsu 221116, China%
	}

\date{\today}
\begin{abstract}
We construct the tripartite Bell-type inequalities of product states for $l_{1}$-norm of coherence, relative entropy of coherence and skew information. Some three-qubit entangled states violate these inequalities. Particulary, the tripartite Bell-type inequality for relative entropy of coherence is always violated by the W class pure or mixed states as well as the GHZ class pure or mixed states, being used as entanglement witness.
\end{abstract}

\maketitle

\section{Introduction}
In a realistic theory, ¡°hidden¡± properties that the particles carry prior to measurement determine the measurement results, and the latter are independent of observation. In a local realistic theory, the measurements results obtained at two spacelike separation locations are independent of each other. The statistical correlations of measurements on multiparticle systems are imposed constraints by local realism. These constraints are in the form of Bell-type inequalities~ \cite{Bell1964,Greenberger1990,Clauser1969,Mermin1990,Ardehali1992,Belinski1993,Peres1999,Pitowsky2001,Weinfurter2001,Werner2001,Zukowski2002,
ZukowskiBrukner2002,Sen2002}, which was first proposed by Bell in Ref.~ \cite{Bell1964}. Quantum mechanics predicts violation of these inequalities, and it is known as Bell's theorem~\cite{Bell1964}. The physical reason of this phenomenon is that quantum nonlocality is a basic property of quantum states, and refers to the correlations that cannot be described by any local hidden variable theory~\cite{Bergmann2000}. If a composite state of a multipartite system cannot be written as a tensor product of states of the individual subsystems, the composite state is entangled. Particularly, nonlocality is the most important characteristic of quantum entanglement~\cite{Bell1964}. A pure bipartite entangled state always violates a Bell inequality, and thus, violation of Bell inequality forms necessary and sufficient criteria to detect entanglement in pure bipartite states~\cite{Gisin1991,Gisin1992}. Multipartite correlation function Bell inequalities, in which measurements on each particle can be chosen between two arbitrary dichotomic observables, have also been given~\cite{Weinfurter2001,Werner2001,Zukowski2002,ZukowskiBrukner2002,Sen2002}.

Quantum coherence has been established as an important notion, especially in the fields like quantum information theory~\cite{Baumgratz2014,Girolami2014,Streltsov2015,Monda2015}, quantum biology~\cite{Abbott2008,Plenio2008,Rebentrost2009,Lloyd2011,Huelga2013} and quantum thermodynamics~\cite{Rodriguez-Rosario2013,Lostaglio2015NC,Narasimhachar2015,Lostaglio2015PRX,Gardas2015}. In quantum information theory, quantum coherence is expected to be a resource~\cite{Baumgratz2014,Girolami2014,Streltsov2015,Monda2015}, and a resource theory of quantum coherence has also been put forward~\cite{Singh2015,Winter2016}. Particularly, Baumgratz et al. provided a quantitative theory of coherence as resource, and the quantum coherence measures such as the $l_{1}$-norm of coherence and the relative entropy of coherence were proposed in Ref.~\cite{Baumgratz2014}. Furthermore, the interplay between coherence and entanglement, quantum discord, mixedness has been given ~\cite{Streltsov2015,Singh2015PRA,Kumar2015,Xi2015,Yao2015,Cheng2015}. These results can be attributed to that resources are often interconvertible, which means trading one for another. An operational theory of coherence in quantum systems has been established by using the transformation processes like coherence distillation~\cite{Winter2016}. Very recently, Bu et al. constructed Bell-type inequalities of product states for $l_{1}$-norm of coherence and relative entropy of coherence in bipartite system~\cite{Bu2016}. In this paper, we push their results forward to the case of tripartite system, and establish the tripartite Bell-type inequalities for $l_{1}$-norm of coherence and relative entropy of coherence.

Skew information, motivated by the study of quantum measurements in the presence of a conserved quantity \cite{Araki1960,Yanase1961}, was firstly introduced by Wiger and Yanase to describe the information contents of mixed states in 1963 \cite{Wigner1963}. The statistical idea underlying skew information is shown to be the Fisher information in the theory of statistical estimation\cite{Luo2003}, and the Fisher information is not only a key notion of statistical inference \cite{Fisher1925,Cramer1974}, but also playing an increasing role in informational treats of physics \cite{Helstrom1976,Holevo1982,Frieden1995,Luo2002,Frieden2004}. By using skew information, an intrinsic measure for synthesizing quantum uncertainty of a mixed state has been proposed \cite{Luo2006}. The measure for correlations in terms of skew information is also given, and compared with the other measures of quantum correlations, the advantage of this measure is that its evaluation does not involve any optimization \cite{Luo2012}. Based on skew information, Girolami et al. defined and investigated a class of measures of bipartite quantum correlations of the discord type \cite{Modi2012} through local quantum uncertainty \cite{Girolami2013}. The skew information is also proven to be an asymmetry monotone in Ref. \cite{Girolami2014,Marvian2014}. Moreover, Girolami originally proposed the skew information as a coherence monotone \cite{Girolami2014}, however, it is now shown that such a quantity may increase under the action of incoherent operations \cite{Du2015,Sun2016}. In this paper, we establish the tripartite Bell-type inequalities for skew information motivated by the Bell-type inequality in quantum coherence.

The paper is organized as follows. In Sec. II, we introduce the definitions of the $l_{1}$-norm of coherence, the relative entropy of coherence, skew information and tripartite Bell inequality. Subsequently, we construct the tripartite Bell-type inequalities of three-qubit product states for two measures of coherence and skew information. In Sec. IV, we give examples that the tripartite Bell-type inequalities are violated by some typical three-qubit entangled states, and show that three-qubit W class pure or mixed states as well as GHZ pure or mixed states always violate the Bell-type inequality for relative measure of coherence. In the end, we conclude our paper with a discussion on our results in Sec. V.

\section{two measures of coherence, skew information and tripartite Bell inequality}
In this section, we introduce the definition of two measures of coherence, i.e., the $l_{1}$-norm of coherence and the relative entropy of coherence, skew information and the tripartite Mermin-Ardehali-Belinksii-Klyshko (MABK)~\cite{Mermin1990,Ardehali1992,Belinski1993} inequality.

\subsection{two measures of coherence}
Coherence is identified by the presence of off-diagonal terms in the density matrix. The basis of the axiomatic approach to quantify coherence consists of four postulates that any quantifier of coherence $\mathcal{C}$ should fulfill~\cite{Baumgratz2014}. Furthermore, the quantifiers of coherence depend on the reference basis, which means quantum states having different value of coherence in different basis. The $l_{1}$-norm of coherence and the relative entropy of coherence were first introduced and studied in Ref.~\cite{Baumgratz2014}, and they fulfil the four postulates. In the reference basis $\{|i\rangle\}$, the $l_{1}$-norm of coherence of a state $\rho$ is defined as
\begin{equation}
\mathcal{C}_{l_{1}}(\rho)=\sum\limits_{i,j\ i\neq j}|\rho_{ij}|=\sum\limits_{i,j}|\rho_{ij}|-1.
\end{equation}
The relative entropy of coherence is defined as
\begin{equation}
\mathcal{C}_{r}(\rho)=\min\limits_{\sigma\in \mathcal{I}}S(\rho\|\sigma)=S(\rho_{d})-S(\rho),
\end{equation}
where $\mathcal{I}$ is the set of all incoherent states in the reference basis $\{|i\rangle\}$, $S(\rho\|\sigma)={\rm Tr}\rho(\log_{2}\rho-\log_{2}\sigma)$ is the relative entropy between $\rho$ and $\sigma$, $S(\rho)=-{\rm Tr}\rho\log_{2}\rho$ is the von Neumann entropy of the state $\rho$, and $\rho_{d}$ is the diagonal matrix formed with diagonal elements of the state $\rho$ in the reference basis $\{|i\rangle\}$, i.e., $\rho_{d}=\sum_{i}\langle i|\rho|i\rangle|i\rangle\langle i|$. Here, ${\rm Tr}$ denotes the trace. For the canonical example of a maximally coherent state $|\Psi_{2}\rangle=\frac{1}{\sqrt{2}}\sum_{i=1}^{2}|i\rangle$ of $2$-dimensional system, $\mathcal{C}_{l_{1}}=1$ and $\mathcal{C}_{r}=1$.

\subsection{skew information}
The skew information is defined as
\begin{equation}
I(\rho,X)=-\frac{1}{2}{\rm Tr}[\rho^{1/2},X]^2,
\end{equation}
where $[\cdot,\cdot]$ denotes the commutator. $\rho$ is a general quantum state and $X$ is an observable, which is a self-adjoint operator and often a Hamiltonian served as a conserved quantity \cite{Luo2012}. The skew information provides an alternative measure of the information content for the quantum state $\rho$ with respect to observables not commuting with (i.e., skew to) the conserved quantity $X$ \cite{Wigner1963}. For the pure states, $I(\rho,X)$ reduces to the conventional variance $V(\rho,X):={\rm Tr}\rho X^2-({\rm Tr}\rho X)^2$. And generally, $0\leq I(\rho,X)\leq V(\rho,X)$. Particularly, $I(\rho,X)\leq1$ can be obtained for the state $\rho$ of the qubit system, and the equality can be saturated for pure qubit states~\cite{Mani2015}.

\subsection{Bell inequality}
For the case in which measurements on each particle can be chosen between two arbitrary dichotomic observables, \.{Z}ukowski et al. derived a single general Bell inequality, which they deemed a sufficient and necessary condition for the correlation function for $N$ particles to be describable in a local and realistic picture~\cite{Zukowski2002}. From this inequality, the Clauser-Horne-Shimony-Holt (CHSH) inequality~\cite{Clauser1969} for two-particle systems and the MABK inequalities for $N$ particles~\cite{Mermin1990,Ardehali1992,Belinski1993} can be obtained as corollaries. Specially, in the case of current interest, that of three qubits (labeled as $K=A,B$ or $C$), the inequality is
\begin{equation}
\left|M_{A}^{1}M_{B}^{1}M_{C}^{2}+M_{A}^{1}M_{B}^{2}M_{C}^{1}+M_{A}^{2}M_{B}^{1}M_{C}^{1}-M_{A}^{2}M_{B}^{2}M_{C}^2\right|\leq2,\label{eq4}
\end{equation}
where the superscripts $1$ and $2$ denote the two arbitrary dichotomic observables in which measurements on each particle can be chosen, and the measurement operators $M_{K}^{i}$ ($K=A,B,C$ and $i=1,2$) are the operators corresponding to measurements performing on qubit $K$.

\section{tripartite Bell-type inequality for product states}
In this section, we will construct tripartite Bell-type inequality for $l_{1}$-norm of coherence, relative entropy of coherence and skew information.

\subsection{tripartite Bell-type inequality for $l_{1}$-norm of coherence}
First of all, we build the tripartite Bell-type inequality of three-qubit product states for $l_{1}$-norm of coherence. In the following, if we choose observables $X,\ Y$ and $Z$ in $\mathcal{H}_{A}$, $\mathcal{H}_{B}$ and $\mathcal{H}_{C}$, respectively, the eigenvectors of $X,\ Y$ and $Z$ form the basis of $\mathcal{H}_{A}\otimes\mathcal{H}_{B}\otimes\mathcal{H}_{C}$. Therefore, $\mathcal{C}_{l_{1}}(\rho_{ABC},X,Y,Z)$ is used to denote the $l_{1}$-norm of coherence of the state $\rho_{ABC}$ in the basis formed by $X,\ Y$ and $Z$. Similarly, in the next subsection, the relative entropy of coherence of the state $\rho_{ABC}$ in the basis formed by $X,\ Y$ and $Z$ will be denoted by $\mathcal{C}_{r}(\rho_{ABC},X,Y,Z)$.
\begin{theorem}
For the product state $\rho_{ABC}=\rho_{A}\otimes\rho_{B}\otimes\rho_{C}$ in three-qubit system $\mathcal{H}_{A}\otimes \mathcal{H}_{B}\otimes \mathcal{H}_{C}$, and the observables $M_{K}^{i}$ ($K=A,B,C$ and $i=1,2$), there is
\begin{equation}
\begin{array}{ll}
&\mathcal{C}_{l_{1}}(\rho_{ABC},M_{A}^{1},M_{B}^{1},M_{C}^{2})+\mathcal{C}_{l_{1}}(\rho_{ABC},M_{A}^{1},M_{B}^{2},M_{C}^{1})\\
+&\mathcal{C}_{l_{1}}(\rho_{ABC},M_{A}^{2},M_{B}^{1},M_{C}^{1})-\mathcal{C}_{l_{1}}(\rho_{ABC},M_{A}^{2},M_{B}^{2},M_{C}^{2})\leq14.\label{eq5}
\end{array}
\end{equation}
\end{theorem}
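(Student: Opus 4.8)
The plan is to reduce the tripartite bound to the single-qubit factors, exploiting the multiplicativity of the $l_1$-norm of coherence under tensor products. First I would recall (or prove in a line) that for a product state $\rho_A\otimes\rho_B\otimes\rho_C$ measured in the product basis determined by $M_A^i, M_B^j, M_C^k$, one has
\begin{equation}
\mathcal{C}_{l_1}(\rho_{ABC},M_A^i,M_B^j,M_C^k)=\bigl(1+\mathcal{C}_{l_1}(\rho_A,M_A^i)\bigr)\bigl(1+\mathcal{C}_{l_1}(\rho_B,M_B^j)\bigr)\bigl(1+\mathcal{C}_{l_1}(\rho_C,M_C^k)\bigr)-1,
\end{equation}
which follows because the matrix of $\rho_{ABC}$ in the product basis is the Kronecker product of the three single-qubit matrices, so $\sum_{mn}|(\rho_{ABC})_{mn}|$ factorizes. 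For brevity I would write $a_i:=1+\mathcal{C}_{l_1}(\rho_A,M_A^i)$, and similarly $b_j, c_k$; since each single-qubit coherence lies in $[0,1]$, each of these six numbers lies in $[1,2]$.

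With this substitution the left-hand side of \eqref{eq5} becomes
\begin{equation}
a_1b_1c_2+a_1b_2c_1+a_2b_1c_1-a_2b_2c_2-2,
\end{equation}
after the four $-1$'s combine to $-4+\text{(the }+1\text{ from the subtracted term)}$, i.e. $-3$; one must track this constant carefully, and I expect the stated bound $14$ is exactly $16-2$ with the "$16$" coming from the maximum of the trilinear form below. So the task reduces to showing $F(a_1,a_2,b_1,b_2,c_1,c_2):=a_1b_1c_2+a_1b_2c_1+a_2b_1c_1-a_2b_2c_2\le 16$ on the cube $[1,2]^6$ (after restoring the constants this gives $\le 14$).

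The core estimate is a maximization of a multilinear function over a box, and the main obstacle is doing this cleanly rather than by brute-force corner enumeration ($2^6$ cases). I would argue as follows: $F$ is affine in each variable separately, so its maximum over $[1,2]^6$ is attained at a vertex, where every $a_i,b_j,c_k\in\{1,2\}$. At such a vertex the first three terms are each at most $8$ and the last term is at least $1$, giving a crude bound; to get exactly $16$ I would instead observe that $F=a_1(b_1c_2+b_2c_1)+c_1(a_2b_1-a_2b_2\cdot\frac{c_2}{c_1})$ — better, group as $F=a_1b_1c_2+a_1b_2c_1+a_2b_1c_1-a_2b_2c_2$ and note $F\le a_1b_1c_2+a_1b_2c_1+a_2b_1c_1\le 3\cdot 8=24$ is too weak, so one genuinely needs the negative term. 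The clean route: fix the vertex values and note that whenever $a_2=b_2=c_2=2$ the subtracted term is maximal in magnitude ($=8$) and the positive terms are then at most $8+8+8$ but actually one checks the sign pattern forces $F\le 16$; alternatively substitute $u=a_2b_2c_2$ and use that $a_1b_1c_2+a_1b_2c_1+a_2b_1c_1 \le \frac{1}{2}\big((a_1b_1+a_2b_2)(c_1+c_2)+\cdots\big)$ via AM–GM. I expect the paper either does the $64$-vertex check or finds such a grouping; I would present the vertex-reduction argument (affine in each variable $\Rightarrow$ max at a corner) as the conceptual step and then verify the bound $16$ at corners, noting that equality holds when, e.g., $a_1=b_1=c_2=2$ and the other three equal $1$, i.e. at a single-qubit maximally coherent configuration. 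Restoring the additive constant $-2$ yields exactly the bound $14$ in \eqref{eq5}.
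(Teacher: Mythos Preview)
Your reduction is exactly the paper's: use the multiplicativity
\[
\mathcal{C}_{l_1}(\rho_{ABC},M_A^i,M_B^j,M_C^k)+1=\bigl(\mathcal{C}_{l_1}(\rho_A,M_A^i)+1\bigr)\bigl(\mathcal{C}_{l_1}(\rho_B,M_B^j)+1\bigr)\bigl(\mathcal{C}_{l_1}(\rho_C,M_C^k)+1\bigr),
\]
set $a_i,b_j,c_k\in[1,2]$, and reduce the inequality to $F:=a_1b_1c_2+a_1b_2c_1+a_2b_1c_1-a_2b_2c_2\le 16$. Your accounting of the additive constant ($\text{LHS}=F-2$) is correct.

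Where you diverge from the paper is in bounding $F$. You propose the multilinear/vertex argument (correct in principle, though you never actually carry it out). The paper instead uses a CHSH-style grouping: write
\[
|F|=\bigl|a_1(b_1c_2+b_2c_1)+a_2(b_1c_1-b_2c_2)\bigr|\le 2\bigl(|b_1c_2+b_2c_1|+|b_1c_1-b_2c_2|\bigr)\le 16,
\]
using $a_i\le 2$ and then a short two-variable estimate for the bracket. This is cleaner than a $64$-vertex check and mirrors the recursive structure of the MABK inequalities; your affine-in-each-variable observation is a legitimate alternative, but it leaves the ``verify at corners'' step unexecuted.

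One genuine slip: your claimed equality case $a_1=b_1=c_2=2$, $a_2=b_2=c_1=1$ gives $F=8+2+2-2=10$, not $16$. The maximum $F=16$ is attained when \emph{all six} factors equal $2$, i.e.\ when each $\rho_K$ is maximally coherent with respect to \emph{both} of its observables $M_K^1,M_K^2$; this is also what the paper states.
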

\begin{proof}
For the product state $\rho_{A}\otimes\rho_{B}\otimes\rho_{C}$, it is straightforward to prove
\begin{equation}
\mathcal{C}_{l_{1}}(\rho_{ABC},X,Y,Z)+1=[\mathcal{C}_{l_{1}}(\rho_{A},X)+1][\mathcal{C}_{l_{1}}(\rho_{B},Y)+1][\mathcal{C}_{l_{1}}(\rho_{C},Z)+1].
\end{equation}
By using the above result, we have
\begin{equation}\begin{array}{ll}
&[\mathcal{C}_{l_{1}}(\rho_{ABC},M_{A}^{1},M_{B}^{1},M_{C}^{2})+1]+[\mathcal{C}_{l_{1}}(\rho_{ABC},M_{A}^{1},M_{B}^{2},M_{C}^{1})+1]\\
+&[\mathcal{C}_{l_{1}}(\rho_{ABC},M_{A}^{2},M_{B}^{1},M_{C}^{1})+1]-[\mathcal{C}_{l_{1}}(\rho_{ABC},M_{A}^{2},M_{B}^{2},M_{C}^{2})+1]\\
&=[\mathcal{C}_{l_{1}}(\rho_{A},M_{A}^{1})+1][\mathcal{C}_{l_{1}}(\rho_{B},M_{B}^{1})+1][\mathcal{C}_{l_{1}}(\rho_{C},M_{C}^{2})+1]\\
&+[\mathcal{C}_{l_{1}}(\rho_{A},M_{A}^{1})+1][\mathcal{C}_{l_{1}}(\rho_{B},M_{B}^{2})+1][\mathcal{C}_{l_{1}}(\rho_{C},M_{C}^{1})+1]\\
&+[\mathcal{C}_{l_{1}}(\rho_{A},M_{A}^{2})+1][\mathcal{C}_{l_{1}}(\rho_{B},M_{B}^{1})+1][\mathcal{C}_{l_{1}}(\rho_{C},M_{C}^{1})+1]\\
&-[\mathcal{C}_{l_{1}}(\rho_{A},M_{A}^{2})+1][\mathcal{C}_{l_{1}}(\rho_{B},M_{B}^{2})+1][\mathcal{C}_{l_{1}}(\rho_{C},M_{C}^{2})+1].\label{eq7}
\end{array}
\end{equation}
If we denote $\mathcal{C}_{l_{1}}(\rho_{K},M_{K}^{i})+1$ as $V_{K}^{i}$ ($K=A,B,C$ and $i=1,2$), the right-hand side of Eq.~(\ref{eq7}) can be simplified as
\begin{equation}
V_{A}^{1}V_{B}^{1}V_{C}^{2}+V_{A}^{1}V_{B}^{2}V_{C}^{1}+V_{A}^{2}V_{B}^{1}V_{C}^{1}-V_{A}^{2}V_{B}^{2}V_{C}^{2}.\label{eq8}
\end{equation}
Due to the fact that $1\leq V_{K}^{i}\leq 2$, the absolute value of the expression in Eq.~(\ref{eq8}) reads
\begin{equation}
\begin{array}{ll}
&|V_{A}^{1}V_{B}^{1}V_{C}^{2}+V_{A}^{1}V_{B}^{2}V_{C}^{1}+V_{A}^{2}V_{B}^{1}V_{C}^{1}-V_{A}^{2}V_{B}^{2}V_{C}^{2}|\\
=&|V_{A}^{1}(V_{B}^{1}V_{C}^{2}+V_{B}^{2}V_{C}^{1})+V_{A}^{2}(V_{B}^{1}V_{C}^{1}-V_{B}^{2}V_{C}^{2})|\\
\leq&|V_{A}^{1}(V_{B}^{1}V_{C}^{2}+V_{B}^{2}V_{C}^{1})|+|V_{A}^{2}(V_{B}^{1}V_{C}^{1}-V_{B}^{2}V_{C}^{2})|\\
=&2(|V_{B}^{1}V_{C}^{2}+V_{B}^{2}V_{C}^{1}|+|V_{B}^{1}V_{C}^{1}-V_{B}^{2}V_{C}^{2}|)\\
\leq& 16.
\end{array}
\end{equation}
Therefore, one can obtain
\begin{equation}\begin{array}{ll}
&[\mathcal{C}_{l_{1}}(\rho_{ABC},M_{A}^{1},M_{B}^{1},M_{C}^{2})+1]+[\mathcal{C}_{l_{1}}(\rho_{ABC},M_{A}^{1},M_{B}^{2},M_{C}^{1})+1]\\
+&[\mathcal{C}_{l_{1}}(\rho_{ABC},M_{A}^{2},M_{B}^{1},M_{C}^{1})+1]-[\mathcal{C}_{l_{1}}(\rho_{ABC},M_{A}^{2},M_{B}^{2},M_{C}^{2})+1]\\
\leq&16.
\end{array}
\end{equation}
From the above equation, it is found that the inequality in Eq.~(\ref{eq5}) exists. Hence, we complete the proof.
\end{proof}

The equality in Eq.~(\ref{eq5}) will be saturated when $\rho_{A},\ \rho_{B}$ and $\rho_{C}$ are all maximally coherent pure states, i.e., $\mathcal{C}_{l_{1}}(\rho_{K},M_{K}^{i})=1$ ($K=A,B,C$ and $i=1,2$). For the sake of simplicity, we denote the left-hand side of the inequality in Eq.~(\ref{eq5}) as $\mathcal{B}_{\mathcal{C}_{l_1}}$.

\subsection{tripartite Bell-type inequality for relative entropy of coherence}
In this subsection, we build the tripartite Bell-type inequality for relative entropy of coherence in analogy to that for $l_{1}$-norm of coherence.
\begin{theorem}
For the product state $\rho_{ABC}=\rho_{A}\otimes\rho_{B}\otimes\rho_{C}$ in three-qubit system $\mathcal{H}_{A}\otimes \mathcal{H}_{B}\otimes \mathcal{H}_{C}$, and the observables $M_{K}^{i}$ ($K=A,B,C$ and $i=1,2$), there is
\begin{equation}
\begin{array}{ll}
&\mathcal{C}_{r}(\rho_{ABC},M_{A}^{1},M_{B}^{1},M_{C}^{2})+\mathcal{C}_{r}(\rho_{ABC},M_{A}^{1},M_{B}^{2},M_{C}^{1})\\
+&\mathcal{C}_{r}(\rho_{ABC},M_{A}^{2},M_{B}^{1},M_{C}^{1})-\mathcal{C}_{r}(\rho_{ABC},M_{A}^{2},M_{B}^{2},M_{C}^{2})\leq6.\label{eq11}
\end{array}
\end{equation}
\end{theorem}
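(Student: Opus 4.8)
The plan is to run the same argument as for Theorem~1, but with the shifted multiplicativity of $\mathcal{C}_{l_1}$ replaced by the genuine \emph{additivity} of $\mathcal{C}_r$ on product states. First I would establish that for $\rho_{ABC}=\rho_A\otimes\rho_B\otimes\rho_C$,
\begin{equation}
\mathcal{C}_r(\rho_{ABC},X,Y,Z)=\mathcal{C}_r(\rho_A,X)+\mathcal{C}_r(\rho_B,Y)+\mathcal{C}_r(\rho_C,Z).
\end{equation}
This is immediate from the formula $\mathcal{C}_r(\rho)=S(\rho_d)-S(\rho)$: in the product basis built from the eigenvectors of $X$, $Y$, $Z$ the diagonal part of a tensor product is the tensor product of the diagonal parts, $(\rho_{ABC})_d=\rho_{A,d}\otimes\rho_{B,d}\otimes\rho_{C,d}$, and the von Neumann entropy is additive over tensor products, so both $S((\rho_{ABC})_d)$ and $S(\rho_{ABC})$ split into the three single-qubit contributions.

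Next I would introduce the shorthand $a_i=\mathcal{C}_r(\rho_A,M_A^i)$, $b_i=\mathcal{C}_r(\rho_B,M_B^i)$, $c_i=\mathcal{C}_r(\rho_C,M_C^i)$ for $i=1,2$. Applying the additivity above to each of the four terms, the left-hand side of Eq.~(\ref{eq11}) becomes
\begin{equation}
(a_1+b_1+c_2)+(a_1+b_2+c_1)+(a_2+b_1+c_1)-(a_2+b_2+c_2).
\end{equation}
Here, unlike in the $l_1$ case, the subtracted term cancels cleanly against the other three: collecting coefficients, the $a_2,b_2,c_2$ contributions drop out in pairs and the expression collapses to $2a_1+2b_1+2c_1$. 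No triangle-inequality estimate of the kind used in Theorem~1 is needed, because the relevant quantity is linear rather than multilinear in the single-site data.

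Finally I would invoke the qubit bound $0\le\mathcal{C}_r(\rho_K,M_K^i)\le 1$ (the maximally coherent qubit state attains $\mathcal{C}_r=1$), which gives $a_1,b_1,c_1\le 1$ and hence $2a_1+2b_1+2c_1\le 6$, establishing Eq.~(\ref{eq11}). I do not foresee a real obstacle; the only point that deserves a line of care is the additivity step (and, in particular, observing that it is exact additivity, which is what makes the primed-$2$ terms cancel identically). It is also worth remarking that the bound $6$ is saturated exactly when $\rho_A,\rho_B,\rho_C$ are maximally coherent in the bases determined by $M_A^1,M_B^1,M_C^1$, so that $a_1=b_1=c_1=1$.
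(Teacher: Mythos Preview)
Your proposal is correct and follows essentially the same route as the paper: establish the additivity $\mathcal{C}_r(\rho_A\otimes\rho_B\otimes\rho_C,X,Y,Z)=\mathcal{C}_r(\rho_A,X)+\mathcal{C}_r(\rho_B,Y)+\mathcal{C}_r(\rho_C,Z)$, observe that the Bell-type combination collapses to $2[\mathcal{C}_r(\rho_A,M_A^1)+\mathcal{C}_r(\rho_B,M_B^1)+\mathcal{C}_r(\rho_C,M_C^1)]$, and bound each summand by $1$. Your write-up is in fact more explicit than the paper's (you justify the additivity via $(\rho_{ABC})_d=\rho_{A,d}\otimes\rho_{B,d}\otimes\rho_{C,d}$ and additivity of von Neumann entropy, and you spell out the cancellation of the $a_2,b_2,c_2$ terms), but the argument is the same.
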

\begin{proof}
The relative entropy of coherence of the product state $\rho_{A}\otimes\rho_{B}\otimes\rho_{C}$ satisfies
\begin{equation}
\mathcal{C}_{r}(\rho_{A}\otimes\rho_{B}\otimes\rho_{C},X,Y,Z)=\mathcal{C}_{r}(\rho_{A},X)+\mathcal{C}_{r}(\rho_{B},Y)+\mathcal{C}_{r}(\rho_{C},Z).
\end{equation}
By using this result, the left-hand side of the Eq.~(\ref{eq11}) can be simplified as
\begin{equation}
2[\mathcal{C}_{r}(\rho_{A},M_{A}^{1})+\mathcal{C}_{r}(\rho_{B},M_{B}^{1})+\mathcal{C}_{r}(\rho_{C},M_{C}^{1})]\leq6.
\end{equation}
Here, we complete the proof.
\end{proof}
The equality in Eq.~(\ref{eq11}) is again saturated with $\rho_{A}$, $\rho_{B}$ and $\rho_{C}$ being the maximally coherent pure states. Henceforth, the left-hand side of Eq.~(\ref{eq11}) will be denoted as $\mathcal{B}_{\mathcal{C}_{r}}$ for the sake of convenience.

\subsection{tripartite Bell-type inequality for skew information}
After having established the tripartite Bell-type inequality for two measures of coherence, we will build the tripartite Bell-type inequality for skew information in this subsection. Here, if the observables $X$, $Y$ and $Z$ in $\mathcal{H}_{A}$, $\mathcal{H}_{B}$ and $\mathcal{H}_{C}$ are chosen, respectively, the skew information $I(\rho_{ABC},X,Y,Z)$ denotes $I(\rho_{ABC},X\otimes \mathbf{I}_{B}\otimes \mathbf{I}_{C}+\mathbf{I}_{A}\otimes Y\otimes\mathbf{I}_{C}+\mathbf{I}_{A}\otimes \mathbf{I}_{B}\otimes Z)$ for the sake of simplicity, where $\mathbf{I}_{K}\ (K=A,B,C)$ is the identity operator on system $H_{K}$.
\begin{theorem}
For the product state $\rho_{ABC}=\rho_{A}\otimes\rho_{B}\otimes\rho_{C}$ in three-qubit system $\mathcal{H}_{A}\otimes \mathcal{H}_{B}\otimes \mathcal{H}_{C}$, and the observables $M_{K}^{i}$ ($K=A,B,C$ and $i=1,2$), there is
\begin{equation}
\begin{array}{ll}
&I(\rho_{ABC},M_{A}^{1},M_{B}^{1},M_{C}^{2})+I(\rho_{ABC},M_{A}^{1},M_{B}^{2},M_{C}^{1})\\
+&I(\rho_{ABC},M_{A}^{2},M_{B}^{1},M_{C}^{1})-I(\rho_{ABC},M_{A}^{2},M_{B}^{2},M_{C}^{2})\leq6.\label{eq14}
\end{array}
\end{equation}
\end{theorem}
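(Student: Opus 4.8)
The plan is to mimic the structure of the two preceding proofs: first establish an additivity (or near-additivity) property of skew information on product states for sums of local observables, then reduce the left-hand side of Eq.~(\ref{eq14}) to a fixed multiple of a single local quantity and bound that quantity using $I(\rho_K,X)\le 1$ for qubits.

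First I would compute $I(\rho_A\otimes\rho_B\otimes\rho_C,\,X\otimes\mathbf{I}\otimes\mathbf{I}+\mathbf{I}\otimes Y\otimes\mathbf{I}+\mathbf{I}\otimes\mathbf{I}\otimes Z)$. Since $(\rho_A\otimes\rho_B\otimes\rho_C)^{1/2}=\rho_A^{1/2}\otimes\rho_B^{1/2}\otimes\rho_C^{1/2}$, and the three local terms in the observable act on different tensor factors, the commutator $[(\rho_{ABC})^{1/2},\,X\oplus Y\oplus Z]$ splits as a sum of three terms, each supported on a single factor (tensored with $\rho^{1/2}$ on the other two). Expanding $-\tfrac12\mathrm{Tr}[\cdot,\cdot]^2$, the cross terms between different factors vanish after taking the trace (they factor into a trace of a commutator $\mathrm{Tr}[\rho_K^{1/2},M]$-type piece on one factor, which is zero, times other factors), leaving
\begin{equation}
I(\rho_{ABC},X,Y,Z)=I(\rho_A,X)+I(\rho_B,Y)+I(\rho_C,Z).\label{eq:skewadd}
\end{equation}
This is the skew-information analogue of the additivity used for $\mathcal{C}_r$, and I would verify the cross-term cancellation carefully since that is the only nontrivial algebraic point.

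Given Eq.~(\ref{eq:skewadd}), the left-hand side of Eq.~(\ref{eq14}) collapses exactly as in the relative-entropy case: the terms $I(\rho_A,M_A^1)$ appear with coefficient $+1+1=2$ (from the first two summands) minus $0$, wait — more precisely, writing out all four summands and collecting, the three ``plus'' terms contribute $M_A^1,M_A^1,M_A^2$ on $A$, etc., and the ``minus'' term subtracts $M_A^2,M_B^2,M_C^2$. Collecting,
\begin{equation}
\mathcal{B}_I=2\,I(\rho_A,M_A^1)+2\,I(\rho_B,M_B^1)+2\,I(\rho_C,M_C^1)+\bigl[I(\rho_A,M_A^2)-I(\rho_A,M_A^2)\bigr]+\cdots
\end{equation}
— i.e. the $M_K^2$ contributions cancel pairwise between the third plus-term and the minus-term appropriately, leaving $\mathcal{B}_I=2[I(\rho_A,M_A^1)+I(\rho_B,M_B^1)+I(\rho_C,M_C^1)]$. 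Then invoking the stated qubit bound $I(\rho_K,M_K^i)\le 1$ gives $\mathcal{B}_I\le 6$, completing the proof, with equality when each $\rho_K$ is a pure qubit state maximally skew to $M_K^1$.

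The main obstacle I anticipate is establishing Eq.~(\ref{eq:skewadd}) cleanly — specifically justifying that all mixed cross terms in $\mathrm{Tr}[(\rho_A^{1/2}\otimes\rho_B^{1/2}\otimes\rho_C^{1/2}),\,X\oplus Y\oplus Z]^2$ vanish. This requires noting that a typical cross term has the form $\mathrm{Tr}\bigl(\rho_A^{1/2}[\rho_A^{1/2},X]\bigr)\cdot(\text{stuff on }B,C)$ or similar, and that $\mathrm{Tr}(\rho^{1/2}[\rho^{1/2},X])=\mathrm{Tr}(\rho X)-\mathrm{Tr}(\rho X)=0$ by cyclicity; one must check every cross term reduces to such a factor. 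Once that bookkeeping is done, the rest is immediate from the earlier-stated facts.
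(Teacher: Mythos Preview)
Your proposal is correct and follows essentially the same route as the paper: establish the additivity $I(\rho_A\otimes\rho_B\otimes\rho_C,X,Y,Z)=I(\rho_A,X)+I(\rho_B,Y)+I(\rho_C,Z)$ by expanding the commutator and killing the cross terms, then collapse the Bell combination to $2[I(\rho_A,M_A^1)+I(\rho_B,M_B^1)+I(\rho_C,M_C^1)]\le 6$ via the qubit bound $I\le 1$. Your explicit justification that each cross term contains a factor $\mathrm{Tr}(\rho^{1/2}[\rho^{1/2},X])=0$ by cyclicity is exactly the mechanism the paper uses (it simply writes the cross terms and sets them to zero without spelling this out).
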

\begin{proof}
First of all, according to the definition of skew information, we have
\begin{equation}
\begin{array}{ll}
&I(\rho_{A}\otimes\rho_{B}\otimes\rho_{C},X,Y,Z)\\
=&-\frac{1}{2}{\rm Tr}[\sqrt{\rho_{A}}\otimes\sqrt{\rho_{B}}\otimes\sqrt{\rho_{C}},X\otimes \mathbf{I}_{B}\otimes \mathbf{I}_{C}+\mathbf{I}_{A}\otimes Y\otimes\mathbf{I}_{C}+\mathbf{I}_{A}\otimes \mathbf{I}_{B}\otimes Z]^2\\
=&-\frac{1}{2}{\rm Tr}\{[\sqrt{\rho_{A}},X]\otimes\sqrt{\rho_{B}}\otimes\sqrt{\rho_{C}}+\sqrt{\rho_{A}}\otimes[\sqrt{\rho_{B}},Y]\otimes\sqrt{\rho_{C}}
+\sqrt{\rho_{A}}\otimes\sqrt{\rho_{B}}\otimes[\sqrt{\rho_{C}},Z]\}^2\\
=&-\frac{1}{2}{\rm Tr}[\sqrt{\rho_{A}},X]^2-\frac{1}{2}{\rm Tr}[\sqrt{\rho_{B}},Y]^2-\frac{1}{2}{\rm Tr}[\sqrt{\rho_{C}},Z]^2\\
&-{\rm Tr}\{[\sqrt{\rho_{A}},X]\sqrt{\rho_{A}}\otimes\sqrt{\rho_{B}}[\sqrt{\rho_{B}},Y]\otimes\rho_{C}\}\\
&-{\rm Tr}\{[\sqrt{\rho_{A}},X]\sqrt{\rho_{A}}\otimes\rho_{B}\otimes\sqrt{\rho_{C}}[\sqrt{\rho_{C}},Z]\}\\
&-{\rm Tr}\{\rho_{A}\otimes[\sqrt{\rho_{B}},Y]\sqrt{\rho_{B}}\otimes\sqrt{\rho_{C}}[\sqrt{\rho_{C}},Z]\}\\
=&I(\rho_{A},X)+I(\rho_{B},Y)+I(\rho_{C},Z)-0-0-0\\
=&I(\rho_{A},X)+I(\rho_{B},Y)+I(\rho_{C},Z).
\end{array}
\end{equation}
By using the result, the left-hand side of Eq.~(\ref{eq14}) can be simplified as
\begin{equation}
2[I(\rho_{A},M_{A}^{1})+I(\rho_{B},M_{B}^{1})+I(\rho_{C},M_{C}^{1})]\leq6.
\end{equation}
Hence, we complete the proof.
\end{proof}

The equality in Eq.~(\ref{eq14}) can be saturated when $\rho_{A}$, $\rho_{B}$ and $\rho_{C}$ are the single qubit pure states~\cite{Mani2015}. In the following, the left-hand side of Eq.~(\ref{eq14}) will be denoted as $\mathcal{B}_{I}$ for the sake of convenience.

\section{violations of tripartite Bell-type inequality}
In the above section, we have constructed tripartite Bell-type inequalities for two measures of coherence and skew information. Now, we will give examples which violate these inequalities. Tripartite entangled states mean states with genuine entanglement of all three subsystems. There exist two subtypes of inequivalent states, i.e., GHZ state and W state~\cite{Dur2000}. Here, we just investigate the violation of the tripartite Bell-type inequality by the GHZ state and the W state, and the states obtained from them.

\textit{Example 1}. For two measures of coherence, the observables are assumed to be $M_{A}^{1}=\sigma_{x},\ M_{A}^{2}=\sigma_{z},\ M_{B}^{1}=-\sigma_{y},\ M_{B}^{2}=\sigma_{z},\ M_{C}^{1}=\sigma_{x},$ and $M_{C}^{2}=\sigma_{z}$. In other words, the basis formed by $M_{A}^{1},\ M_{A}^2$ for qubit $A$ are $\{(\frac{1}{\sqrt{2}},-\frac{1}{\sqrt{2}})^{T},(\frac{1}{\sqrt{2}},\frac{1}{\sqrt{2}})^{T}\},\ \{(1,0)^{T},(0,1)^{T}\}$. Similarly, the basis for qubit $B$ are $\{(\frac{1}{\sqrt{2}},\frac{i}{\sqrt{2}}
)^{T},(\frac{i}{\sqrt{2}},\frac{1}{\sqrt{2}})^{T}\},\ \{(1,0)^{T},(0,1)^{T}\}$, and the basis for qubit $C$ are $\{(\frac{1}{\sqrt{2}},-\frac{1}{\sqrt{2}})^{T},(\frac{1}{\sqrt{2}},\frac{1}{\sqrt{2}})^{T}\},\ \{(1,0)^{T},(0,1)^{T}\}$.
Now, we consider violations of the tripartite Bell-type inequalities for different states under these basis.

The W state $|{\rm W}\rangle=\frac{1}{\sqrt{3}}(|001\rangle+|010\rangle+|100\rangle)$. After some simple calculation, it is found $\mathcal{B}_{\mathcal{C}_{l_{1}}}\thickapprox15.876>14$, which indicates the W state violates the tripartite Bell-type inequality for $l_{1}$-norm of coherence given in Eq.~(\ref{eq5}). Similarly, $\mathcal{B}_{\mathcal{C}_{r}}\thickapprox6.503>6$ shows that the tripartite Bell-type inequality for relative entropy of coherence is violated by the W state.

The W class pure states $|\psi_{\rm W}\rangle=\cos{\theta}\cos{\phi}|001\rangle+\cos{\theta}\sin{\phi}|010\rangle+\sin{\theta}|100\rangle$, where $\theta\in(0,\pi)$ and $\phi\in(0,2\pi)$. After straightforward calculation, we can obtain the expressions of $\mathcal{B}_{\mathcal{C}_{l_{1}}}$ and $\mathcal{B}_{\mathcal{C}_{r}}$, however, they are analytically messy and therefore we have chosen to simply plot the exact, numerical results. In Fig.~\ref{fig1}, the plots of $\mathcal{B}_{\mathcal{C}_{l_{1}}}$ and $\mathcal{B}_{\mathcal{C}_{r}}$ as functions of the parameters $\theta$ and $\phi$ are given. From Fig.~\ref{fig1}(a), the value of $\mathcal{B}_{\mathcal{C}_{l_{1}}}$ is greater than $14$ in most of part given by $\theta$ and $\phi$, which means the violation of tripartite Bell-type inequality for $l_{1}$-norm of coherence by W class pure states. The Fig.~\ref{fig1}(b) shows that the value of $\mathcal{B}_{\mathcal{C}_{r}}$ is always greater than $6$, and the result indicates that the W class pure states always violate the tripartite Bell-type inequality for relative entropy of coherence. In other words, the latter can serve as an entanglement witness for the W class pure states.
\begin{figure}[h]
\begin{center}
\includegraphics[scale=1]{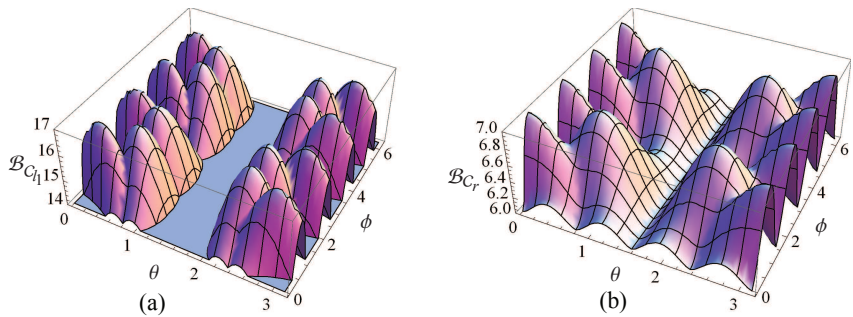}
\end{center}
\caption{(a) $\mathcal{B}_{\mathcal{C}_{l_{1}}}$, (b) $\mathcal{B}_{\mathcal{C}_{r}}$ as functions of the parameters $\theta$ and $\phi$ for the W class pure states.}
\label{fig1}
\end{figure}

The W class mixed states $\rho_{W}=\frac{p}{8}\mathbf{I}_{8}+(1-p)|W\rangle\langle W|$, where $\mathbf{I}_{8}$ is the identity matrix with dimensions $8$ and $p\in(0,1)$. These states are the generalized Werner states of three qubits according to the definition of Pittenger et al.~\cite{Pittenger2000}. With careful calculation, $\mathcal{B}_{\mathcal{C}_{l_{1}}}=\frac{1}{3}[31+16\sqrt{2}-(25+16\sqrt{2})p]$, which will be smaller than $14$ when $p\in((-17+16\sqrt{2})/(25+16\sqrt{2}),1)$. Therefore, the tripartite Bell-type inequality for $l_{1}$-norm of coherence will be violated by the W class mixed states if $p<(-17+16\sqrt{2})/(25+16\sqrt{2})\thickapprox0.118$. However, the case will be totally different for the tripartite Bell-type inequality for relative entropy of coherence. Now,
\begin{equation}
\mathcal{B}_{\mathcal{C}_{r}}=\frac{8-5p}{24}\log_{2}{\frac{8-5p}{24}}-\frac{4-p}{3}\log_{2}{\frac{4-p}{24}}+\frac{3p}{8}\log_{2}{\frac{p}{8}}
-\frac{2+p}{2}\log_{2}{\frac{2+p}{24}},
\end{equation}
and $\mathcal{B}_{\mathcal{C}_{r}}$ is greater than $6$ when $p\in(0,1)$. The result reveals that the W class mixed states always violate the tripartite Bell-type inequality for relative entropy of coherence, thus, the latter can be used as entanglement witness of the former.

The GHZ state $|{\rm GHZ}\rangle=\frac{1}{\sqrt{2}}(|000\rangle+|111\rangle)$. For this particular state, $\mathcal{B}_{\mathcal{C}_{l_{1}}}=20>14$ and $\mathcal{B}_{\mathcal{C}_{r}}=8>6$. Therefore, the GHZ state violates the tripartite Bell-type inequality in quantum coherence given in Eqs.~(\ref{eq5}) and ~(\ref{eq11}).

The GHZ class pure states $|\psi_{\rm GHZ}\rangle=\cos{\theta}|000\rangle+\sin{\theta}|111\rangle$, where $\theta\in(0,\pi)$. One can easily obtain $\mathcal{B}_{\mathcal{C}_{l_1}}=11+11|\sin{2\theta}|$. Thus, the tripartite Bell-type inequality for $l_{1}$-norm of coherence will be violated by the GHZ class pure states if $|\sin{2\theta}|>3/11$. In Fig.~\ref{fig2}(a), the evolution of $\mathcal{B}_{\mathcal{C}_{l_1}}$ versus the parameter $\theta$ has been numerically demonstrated, and the violation of the tripartite Bell-type inequality for $l_{1}$-norm of coherence occurs in most area of $\theta$. When it comes to the tripartite Bell-type inequality for relative entropy of coherence, $\mathcal{B}_{\mathcal{C}_{r}}=6-2(\cos^{2}{\theta}\log_{2}{\cos^{2}{\theta}}+\sin^{2}{\theta}\log_{2}{\sin^{2}{\theta}})$, which is always greater than or equal to $6$. That is, The GHZ class pure states violate the tripartite Bell-type inequality for relative entropy of coherence. This result can be intuitively observed from Fig.~\ref{fig2}(b), in which we plot the evolution of $\mathcal{B}_{\mathcal{C}_{l_1}}$ versus the parameter $\theta$.
\begin{figure}[h]
\begin{center}
\includegraphics[scale=1.35]{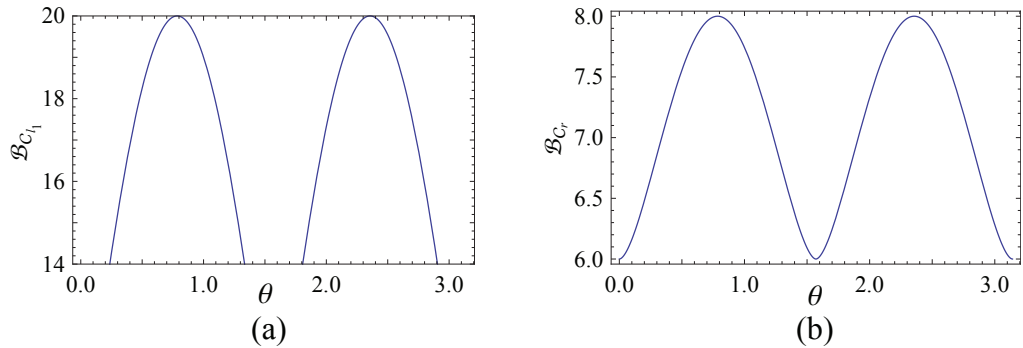}
\end{center}
\caption{(a) $\mathcal{B}_{\mathcal{C}_{l_{1}}}$, (b) $\mathcal{B}_{\mathcal{C}_{r}}$ as a function of the parameter $\theta$ for the GHZ class pure states.}
\label{fig2}
\end{figure}

The GHZ class mixed states $\rho_{\rm GHZ}=\frac{p}{8}\mathbf{I}_{8}+(1-p)|{\rm GHZ}\rangle\langle {\rm GHZ}|$. According to the definition given in Ref.~\cite{Pittenger2000}, these states are also the generalized three-qubit Werner states. After careful calculation, $\mathcal{B}_{\mathcal{C}_{l_{1}}}=22-20p$, from which it is easily to find the GHZ class mixed states cannot violate the tripartite Bell-type inequality for $l_{1}$-norm of coherence when $p\in[0.3,1)$. However, similar to the case of the W class mixed states, the tripartite Bell-type inequality for relative entropy of coherence is always violated by the GHZ-type mixed states, and now
\begin{equation}
\mathcal{B}_{\mathcal{C}_{r}}=6+\left(1-\frac{3p}{4}\right)\log_{2}{(4-3p)}+\frac{3p}{4}\log_{2}p.
\end{equation}
$\mathcal{B}_{\mathcal{C}_{r}}$ is greater than $6$ when $p\in(0,1)$. Therefore, the tripartite Bell-type inequality for relative entropy of coherence can serve as the entanglement witness for the GHZ class mixed states.

In Ref.~\cite{Bu2016}, the author claimed the bipartite Bell-inequality for relative entropy of coherence is violated by all two-qubit pure entangled states. Here, from the six cases given in the first example, the W class pure states, the W class mixed states, the GHZ class pure states and the GHZ class mixed states violate the tripartite Bell-type inequality for relative entropy of coherence. That is, different from the results in Ref.~\cite{Bu2016}, we present examples of the tripartite Bell-type inequality for relative entropy of coherence serving as entanglement witness for both pure states and mixed states.

In example 1, we have presented the entangled states and the reference basis such that the tripartite Bell-type inequalities in quantum coherence are violated. In the following, we give an example to indicate the tripartite Bell-type inequality for skew information can also be violated.

\textit{Example 2.} For the skew information, the observables are assumed to be $M_{A}^{1}=\sigma_{z},\ M_{A}^{2}=\sigma_{x},\ M_{B}^{1}=\cos{\frac{\pi}{6}}\sigma_{z}-\sin{\frac{\pi}{6}}\sigma_{x},\ M_{B}^{2}=\sin{\frac{\pi}{6}}\sigma_{z}+\cos{\frac{\pi}{6}}\sigma_{x},\ M_{C}^{1}=\cos{\frac{\pi}{3}}\sigma_{z}-\sin{\frac{\pi}{3}}\sigma_{x},$ and $M_{C}^{2}=\sin{\frac{\pi}{3}}\sigma_{z}+\cos{\frac{\pi}{3}}\sigma_{x}$.

Under these observables, we consider the violation of the tripartite Bell-type inequality for skew information.

The W state, the W class pure states and the W class mixed states. For the W state, $\mathcal{B}_{I}=10>6$ shows that the tripartite Bell-inequality is violated. The expression of $\mathcal{B}_{I}$ for the W class pure states is analytically messy, and we numerically plot $\mathcal{B}_{I}$ as functions of the parameters $\theta$ and $\phi$ in Fig.~\ref{fig3}(a). From the figure, the tripartite Bell-inequality for skew information is violated by the W class pure states in most of the area formed by the parameters $\theta$ and $\phi$. When it comes to the W class mixed states, $\mathcal{B}_{I}=10-7.5p-2.5\sqrt{8p-p^2}$, which will be greater than $6$ if $p\in\left(0,(11-\sqrt{57})/20\right)$.

The GHZ state, the GHZ class pure states and the GHZ class mixed states. For the GHZ state, we have $\mathcal{B}_{I}=13.464>6$, and the result shows that the GHZ state violates the tripartite Bell-inequality for skew information. For the GHZ class pure states, $\mathcal{B}_{I}=6+\sqrt{3}-(4+\sqrt{3})\cos{4\theta}$, and we plot $\mathcal{B}_{I}$ as a function of the parameter $\theta$ in Fig.~\ref{fig3}(b). It is obvious that $\cos{4\theta}<\sqrt{3}/(4+\sqrt{3})$ ensures the violation of the tripartite Bell-type inequality for skew information by the GHZ class pure states. In the end, $\mathcal{B}_{I}=(5+\sqrt{3})(4-3p-\sqrt{8p-7p^2})/2$ for the GHZ class mixed states. That is, when $p$ is greater than $(5-\sqrt{3})/11$, the tripartite Bell-type inequality for skew information cannot be violated by the GHZ class mixed states.
\begin{figure}[h]
\begin{center}
\includegraphics[scale=1.35]{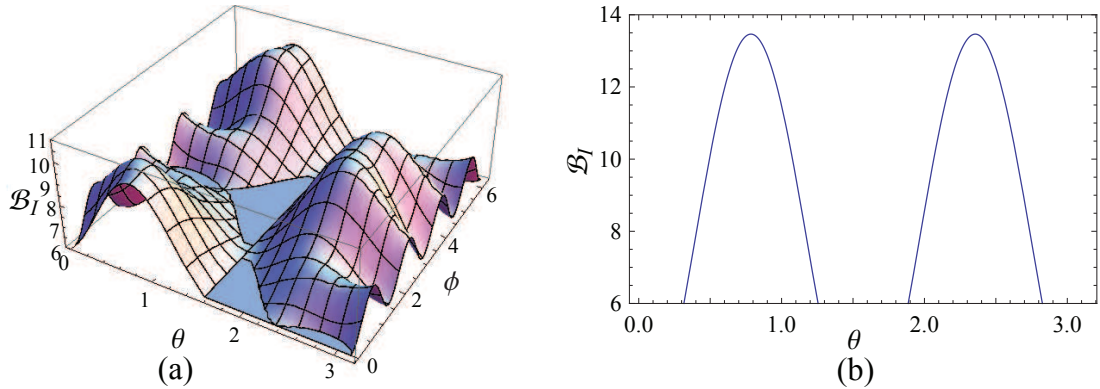}
\end{center}
\caption{The evolution of $\mathcal{B}_{I}$ versus (a) the parameters $\theta$ and $\phi$ for the W class pure states, (b) the parameter $\theta$ for the GHZ class pure states.}
\label{fig3}
\end{figure}

\section{Conclusions}
Bell's theorem draws distinction between quantum mechanics and classical mechanics, and it states that all of the predictions of quantum mechanics cannot be reproduced by physical theory of local hidden variables. Bell inequality, which could be tested experimentally, is a mathematical inequality derived from the locality and reality assumptions. In this work, we build the tripartite Bell-type inequalities for $l_1$-norm of coherence, relative entropy of coherence and skew information. These inequalities could be violated by the W state, the W class pure states, the W class mixed states, the GHZ state, the GHZ class pure states, and the GHZ class mixed states. Particularly, the W and GHZ class pure or mixed states always violate the tripartite inequality for relative entropy of coherence, and therefore the violation of the latter can serve as the entanglement witness for these entangled states.

\acknowledgements
L.Q. acknowledges the support from the Fundamental Research Funds for the Central Universities under Grant No. 2015QNA44.

\bibliography{bellinequality}

\end{document}